\newtheorem{theorem}{Theorem}
\title{Simulating cell populations with explicit cell cycle length - implications to cell cycle dependent tumour therapy}
\author[1,2*]{Peter Boldog}
\author[2,3]{Gergely Röst}
\affil[1]{HUN-REN Wigner Research Centre for Physics, Department of Computational Sciences, Budapest, Hungary}
\affil[2]{Bolyai Institute, University of Szeged, Szeged, Hungary}
\affil[3]{Computational Medicine Group\\ Hungarian Center of Excellence for Molecular Medicine (HCEMM), Szeged, Hungary}
\affil[*]{boldog.peter@wigner.hun-ren.hu}
\begin{abstract}
In this study, we present a stochastic simulation model designed to explicitly incorporate cell cycle length, overcoming limitations associated with classical compartmental models. 
Our approach employs a delay mechanism to represent the cell cycle, allowing the use of arbitrary distributions for cell cycle lengths. 

We demonstrate the feasibility of our model by fitting it to experimental data from melanoma cell lines previously studied by Vittadello et al. 
Notably, our model successfully replicates experimentally observed synchronization phenomena that multi-stage models could not adequately explain. 
By using a gamma distribution to model cell cycle lengths, we achieved excellent agreement between our simulations and empirical data, while significantly reducing computational complexity and parameter estimation challenges inherent in multi-stage approaches. 

Our results highlight the importance of explicitly incorporating cell cycle lengths in modeling cell populations, with potential implications for optimizing cell cycle-dependent tumor therapies.



\end{abstract}
\begin{document}

\flushbottom
\maketitle
%
%
\thispagestyle{empty}

\section*{Introduction}
Cell cycle is a series of precisely controlled events that take place in a cell as it grows and divides.
During the cycle, cells undergo many structural changes, each of which takes a considerable amount of time.
As we will see later, this makes it much more difficult to use classical compartmental models, which assume that the time spent in each compartment is exponentially distributed. 
Division of eukaryotic cells starts with the synthesis of the necessary RNA and protein structures during the so called \textit{G1 phase}. 

Cells go through several \textit{checkpoints} to ensure that the necessary number of molecules are present to start the replication of the genetic material.
The last such checkpoint is the so called \textit{no return point} after which the molecular machinery that is assembled in G1 begins to copy the DNA of the cell. 
This is called the \textit{S phase}, where `S' stands for the synthesis of the nuclear DNA. 
When the two DNA copies are ready for the two daughter cells, the mother cell enters the \textit{G2 phase} in which continues protein and RNA synthesis and approximately doubles its size. 
Then, the mother cell goes through nuclear division (mitosis) and cell division (cytokinesis) producing two daughter cells during the \textit{M phase}.
After M, cells may go into the quiescent \textit{G0 phase}, that may last hours or even the whole lifetime of the cells (depending on its type and the circumstances), or start the whole process over.
For further details we refer to the excellent textbook by Nelson and Cox \cite{lehninger}, the cell cycle is illustrated in Fig. \ref{fig:cell_cycle}.
Given a cell population, cells may be in different stages of the cell cycle.
A population is synchronized, when the cells are in more or less the same state. 
In contrast, when most of the cells are in different state, the population is asynchronous.
Naturally, it may be partially synchronized as well (c.f. Fig. \ref{fig:cell_cycle}).

\begin{figure}[h!]
	\centering 
	\includegraphics[ width=.9\textwidth]{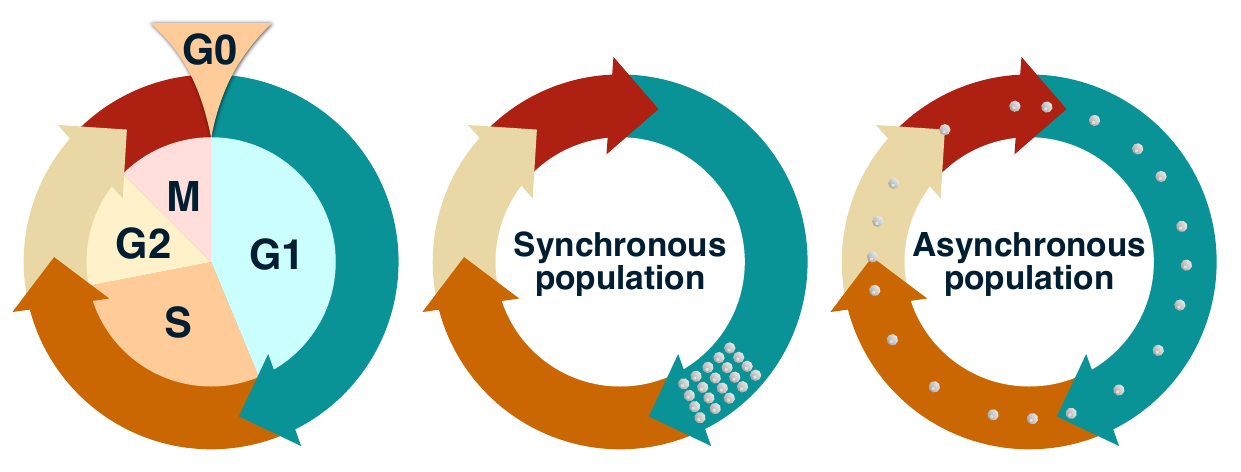}
    \caption{Cell cycle, the series of biochemical events that lead to cell division. 
    The  diagram on the left illustrates the order of the phases in the cycle: G1, S, G2, M. 
    After mitosis cells may enter the quiescent G0 phase. 
    The whole cycle (without G0) last for several hours, typically between 8-24 hours. 
    The diagram in the middle illustrates a synchronous population, when most of the cells are in the same phase in their cycles. 
    The one on the right illustrate an asynchronous population.}
	\label{fig:cell_cycle}
\end{figure}

Cell cycle complicates both mathematical modeling and simulating cell cultures. 
In fact, when cell cycle plays important role in the behavior of the population, the inherent assumption of exponentially distributed cell cycle length in ordinary differenial equations (ODEs) may lead to false conclusions as ``the most probable time for a cell to divide is the current time'' (Yates et al. \cite{yates}).
Each phase of the cycle is of typical length, usually several hours. Thus, when the length of the cell cycle plays an important role, assuming exponentially distributed length is a poor choice.

Moreover, as Vitadello et al. \cite{Vittadello} point out, cell population models based on ODEs inherently assume that the concerned population is asynchronous throughout the modeled period of time.
This may have important implications on various fields, including tissue development or even tumor therapy.
Thus the classical approach, with a system of ODEs, is neither capable of capturing the true dynamics influenced by the length of the cell cycle, nor able to model a synchronized population or spontaneous synchronization.
In the remaining part of the introduction, we summarize some of the main results on the effect of cell cycle in the modeling of cellular populations.

Based on 77 data sets for 16 cell types, Golubev found that the best fitting distribution for the experimentally measured cell cycle lengths was the exponentially modified gamma or exponentially modified Gaussian distribution \cite{Golubev}.
To approximate this, using an integer shape parameter $k$ to keep the Markovian property of the agent based stochastic framework, Yates et al. modeled the cell cycle with an exponentially modified Erlang distribution. 
They suggest a general multi-stage model of the cell cycle that consists of a chain of consecutive states, where the transition rate between $X_i\rightarrow X_{i+1}$ for $i\in\{1,\dots,k-1\}$ is $\lambda_1$, cells spend an exponentially distributed waiting time in each state with mean $\lambda_1^{-1}$.
Finally, cells leave the terminal state of the cycle and split to two daughter cells with rate $\lambda_2$  \cite{yates}:

$$X_1\xrightarrow[]{\lambda_1}X_2\xrightarrow[]{\lambda_1}\dots \xrightarrow[]{\lambda_{1}}X_{k}\xrightarrow[]{\lambda_2}2X_1.$$

Vitadello et al.\cite{Vittadello} extended this approach to explain the behavior of melanoma cell lines C8161, WM983C, and 1205Lu.
In their laboratory experiment they stained cells with a cell cycle-dependent fluorescent dye, called FUCCI.
This dye fluoresces in three different colors according to the different stages of the cell cycle: 
nuclei of cells in G1 appeares red, in case of phase S/G2/M the dye fluoresces green and during G1/S transition the nucleus appears yellow.
Thus, they divided the cell cycle into three main parts, which were named $R,Y,G$ corresponding to the colors. 

In the experiments, the number of cells in each $R(t),Y(t),G(t)$ stage was tracked using time-laps videomicroscopy.
The total number of cells during the experiment is: $M(t)=R(t)+Y(t)+G(t)$.
It was assumed that the lengths of compartment $R,Y,$ and $G$ are $L_R, L_Y, L_G$, respectively, and these are composed of several additional $n$ \textit{auxiliary} stages.
Transitions between these stages are modeled with a chain of ordinary differential equations, where the transition rates are $\lambda_l=n/L_l, l\in\{R,Y,G\}$, and the model consists of the following equations:
\begin{center}
\begin{equation*}
\frac{\mathrm{d}R_i(t)}{\mathrm{d}t} = \left\{
        \begin{array}{ll}
        2\lambda_G G_N(t) - \lambda_R R_1(t) &\quad \text{if } i=1 \\
        \lambda_R R_{i-1}(t) - \lambda_R R_i(t) & \quad \text{if } i=2,\dots N 
    \end{array}
    \right.
\end{equation*}

\begin{equation*}
\frac{\mathrm{d}Y_j(t)}{\mathrm{d}t} = \left\{
        \begin{array}{ll}
        \lambda_R R_N(t) - \lambda_Y Y_1(t) &\quad \text{if } j=1 \\
        \lambda_Y Y_{j-1}(t) - \lambda_Y Y_j(t) & \quad \text{if } j=2,\dots N 
    \end{array}
    \right.
\end{equation*}

\begin{equation*}
\frac{\mathrm{d}G_k(t)}{\mathrm{d}t} = \left\{   \begin{array}{ll}
    \lambda_Y Y_N(t) - \lambda_G G_1(t) &\quad \text{if } k=1 \\
    \lambda_G G_{k-1}(t) - \lambda_G G_k(t) & \quad \text{if } k=2,\dots N. 
    \end{array}
    \right.
\end{equation*}
\end{center}

Based on these considerations, the authors fit the following $3N+3$ length parameter vector (Equation S3 on page 9 of Supplementary Material 1 by Vittadello et al.\cite{Vittadello_sup}):
$[R_1(0), \dots, R_N(0), Y_1(0), \dots, Y_N(0), G_1(0), \dots, G_N(0), L_R, L_Y, L_G].$
They managed to achieve\cite{Vittadello_sup} a remarkably good fit to experimental data at the cost of an extreme large number of intermediate stages (between 30--120, depending on the considered cell line). However, it is important to point out that these auxillary stages may vary between experiments, even in the case of a given cell line. For example: in case of cell line C8161 in one of the experiment (page 12, Eq. S10\cite{Vittadello_sup}) they had to use $N=18$ and in another experiment (page 16, Eq. S15\cite{Vittadello_sup}) they had to use $N=40$.

We may therefore say that multistage models usually require a large number of artifact stages that make it problematic to investigate analytically and even interpret biologically.
Moreover, such models, relying on a Markovian process, do not include the length of the cell cycle as an explicit parameter.

Baker and Röst\cite{baker-rost} followed a different approach. 
To capture the dynamics of a glioblastoma cell population that obeys the so called \textit{grow or go} principle in a finite environment with carrying capacity $K$, they introduced a new delayed logistic equation. 
First, cells were stratified into two subpopulations: \textit{ motile} cells $m$, which can move around the medium and begin proliferation with rate constant $r$ and turn to \textit{proliferating} cells $p$. 
Proliferating cells go through the cell cycle and spend time $\tau$ in a non-motile mode. 
After $\tau$ a $p$ cell split into two daughter cells if there is enough place for the daughter cells, otherwise the cell turns back to mobile mode and aborts cell division. 
The probability at time $t$ for a proliferating cell to find a free place is $(K - p(t) - m(t))/(K)$. 
This is represented in the last term of the first equation of their model:
\begin{align*}
m'(t) &= -rm(t) + rm(t - \tau) + rm(t - \tau) \left( \frac{K - p(t) - m(t)}{K} \right), \\
p'(t) &= rm(t) - rm(t - \tau).
\end{align*} 
The authors employ persistence theory, comparison principles, and $L2$-perturbation techniques to demonstrate that all feasible non-trivial solutions converge to a positive equilibrium. 
They also construct a unique heteroclinic orbit using local invariant manifolds, proving it forms the global attractor alongside the equilibria. 

In Baker, Boldog and Röst\cite{ECMI}, the authors present a slightly different mean-field model for the same purposes.
The model is expressed as a system of delay differential equations in which \textit{motile} cells $m$ enter the cell cycle and turn to \textit{proliferating} type $p$ only if they find a free place to occupy for the daughter cell. The probability of this event is $(K - m(t) - 2p(t))/K$, where the number of proliferating cells and reserved space is $2p(t)$. 
After time $\tau$, upon finishing the cell cycle, proliferating cells split to two motile daughter cells:
\begin{align*}
m'(t) &= -rm(t) \frac{K - m(t) - 2p(t)}{K} + 2r m(t - \tau) \frac{K - m(t - \tau) - 2p(t - \tau)}{K}, \\
p'(t) &= rm(t) \frac{K - m(t) - 2p(t) }{K} - rm(t - \tau) \frac{K - m(t - \tau) - 2p(t - \tau)}{K}.
\end{align*}
It is proved\cite{ECMI} that, starting from biologically feasible initial conditions, the solutions converge to a state where all environmental capacity $K$ is occupied by motile cells.
Depending on initial conditions and parameter values, numerical simulations illustrate different growth behaviors including logistic growth and step-function-like growth corresponding to synchrosination of cells.

\section*{Methods}
Our work is highly motivated by the \textit{transition probability model} (TPM) by Smith and Martin \cite{TPM}.
They observed that phase S and G2 are of characteristic length for the cell type and do not show significant variation in a given population. 
Thus, it is legitimate to think that the length of these phases is deterministic. 
On the other hand, the length of phase G1 shows high variability even in a homogeneous population.
According to their TPM model, the life of the cell can be divided into two states, a proliferating state that lasts for a deterministic period of time (consists of phase S, G2 and M), and an inter division phase that we shall call motile state (consists of phase G0 and G1) that lasts for a random period of time, but the probability of leaving this state is always constant.
The latter assumption about the constant transition rate implies that the length of the motile state is exponentially distributed \cite{TPM}.

\subsection*{The model}
Suppose that cells of uniform size are placed in a well mixed liquid medium with carrying capacity $K$.
The cell population is stratified to \textit{resting} or \textit{quiscent} $Q$ and \textit{proliferating} $P$ cells.
At time $t$ the number of cells in these subpopulations are $Q=Q(t)$ and $P=P(t)$, respectively.
Resting cells commit to division and switch to the proliferating phenotype in a contact inhibited reaction with proliferation rate constant $r$.
Then $\vartheta$ time later, upon the cell cycle elapsed, $P$ cells divide and switch back to the motile phenotype.
To keep the contact inhibited nature of the cell division (for detailed discussion of this reaction type see \cite{boldog}), we assume that cells switch phenotype only if they are big enough (formally, cells managed to reserve space for the daughter cell).
With carrying capacity $K$, the number of free spaces at time $t$ is $K-Q(t)-2P(t)$, where $2P(t)$ stands for the actual number of proliferating cells $P(t)$ and the exact same number of reserved spaces.

Assume that for the initial number of cells $Q(0), P(0)$ in the population we have $Q(0)+P(0)<K$. Based on the above considerations we have two reactions: the phenotype switch $Q\rightarrow P$, that is a random contact inhibited reaction with rate constant $r/K$ between the $Q$ cells and the virtual $K-Q-2P$ vacant spaces, and the scheduled cell divisions $P\rightarrow 2Q$.
The propensity of the phenotype switching and the distribution of the corresponding waiting time is:
\begin{equation}\label{eq:wm_cell_cycle}
a = \frac{r}{K}r(K-Q-2P), \hspace{1cm} \tau\sim Exp(a).    
\end{equation}
The deterministic cell cycle is of length $\vartheta$, which may be constant, such as in the mean field model\cite{ECMI} or chosen from any biologically feasible distribution, which may even be empirical.
Suppose that in a given state we have $P$ number of proliferating cells and their division reactions are scheduled for time $\Theta_1,\dots,\Theta_{P}$.
From which the next delayed reaction will occur at time $t_{R_D}$, where 
\begin{equation}\label{eq:cc_tRD}
    t_{R_D} = \min\{\Theta_i :i=1,\dots,P\}.
\end{equation}
Now, we only have to define how to chose the next reaction to execute. At every step of the simulation, we evaluate $\tau$ and $t_{R_D}$. In case $\tau<t_{R_D}$ a quiscent cell enters cell cycle and whenever $t_{R_D}<\tau$ a proliferating cell completes cell cycle.
Thus, we may summarize the transition between the states with the following flow chart:
$Q\xrightarrow[]{\tau}P\xRightarrow[]{\text{ }\vartheta \text{ }}2Q,$
where $\Rightarrow$ stands for the delayed reaction.

\subsection*{Logistic growth with explicit cell cycle length SSA}
The required data structure is an integer valued variable, which we shall call $Q$, to store the number of quiscent cells; and a list $\underbar{P}$ to store the the scheduled times of the division reactions.
The algorithm is presented in the Appendix section.

\subsection*{The model output}
The output of the model is a time series of the stochastic realisation of the jump process.
During the following experiments, the model is run multiple times and the ensemble average (the expectation of the number of cells in subpopulation $Q$ and $P$) and other statistical measures are calculated.
Based on some elementary considerations we may formulate the following theorem about the asymptotic behaviour of the total number of cells $N(t)=P(t)+Q(t)$.

\begin{theorem}[On the asymptotic bahaviour of the model]
Given the model presented in this section, for the time series of the subpopulations and the total population generated with Algorithm 1 it is true that $N(t)$ is a monotonically non-decreasing function of time,  $\lim_{t\rightarrow\infty}Q(t)=K$ and $\lim_{t\rightarrow\infty}P(t)=0$.
\end{theorem}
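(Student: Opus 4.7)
The plan is to derive all three assertions from the auxiliary invariant $Q(t)+2P(t)\le K$ together with the observation that the configuration $(Q,P)=(K,0)$ is absorbing and reached almost surely in finite time. First, direct inspection of the two reaction channels gives both the monotonicity of $N$ and the invariant. The phenotype switch $Q\to P$ changes $(Q,P,N,Q+2P)$ by $(-1,+1,0,+1)$ and can fire only when $K-Q-2P\ge 1$, preserving the invariant without changing $N$; the delayed division $P\to 2Q$ contributes $(+2,-1,+1,0)$, preserving the invariant exactly while increasing $N$ by one. Combined with the natural initial condition $Q(0)+2P(0)\le K$, this establishes the first claim and the a priori bound $N(t)\le K$.

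Second, I would show $N(t)\to K$ almost surely. Since $N$ is integer valued, non-decreasing and bounded above by $K$, there is an almost-surely finite random time $T^\star$ after which $N\equiv N^\star$ for some $N^\star\le K$. Suppose for contradiction that $N^\star<K$. After $T^\star$ no $P\to 2Q$ reaction may fire, for such a firing would raise $N$ to $N^\star+1$. But any proliferating cell alive at time $T^\star$ carries a scheduled division time that is at most $T^\star+\vartheta$; hence necessarily $P(T^\star)=0$ and $Q(T^\star)=N^\star$. In this state the $Q\to P$ propensity equals $(r/K)\,N^\star(K-N^\star)$, which is strictly positive whenever $0<N^\star<K$, so by the exponential-waiting-time property of the stochastic simulation algorithm a $Q\to P$ transition fires almost surely in finite time; the resulting proliferating cell then completes its cycle $\vartheta$ time units later, forcing $N$ above $N^\star$, a contradiction. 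The residual case $N^\star=0$ is excluded by assuming $Q(0)+P(0)\ge 1$.

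Once $N$ reaches $K$, the invariant $Q+2P\le K$ combined with $Q+P=K$ forces $P=0$ and $Q=K$; both reaction propensities are then zero, so this state is absorbing and persists for all later $t$, yielding the two limit statements simultaneously. The main obstacle, as I see it, is the second step: one has to interleave two different ``something must happen'' arguments --- the deterministic one saying that any scheduled delayed division fires at its prescribed instant, and the probabilistic one saying that a state of positive propensity cannot persist almost surely. Handling these together rigorously requires a careful treatment of the hybrid Markov-plus-delay dynamics implemented by Algorithm 1.
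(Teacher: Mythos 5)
Your proof is correct, and it reaches the conclusion by the same underlying mechanism as the paper --- the propensity $a$ stays strictly positive until the space is saturated, and scheduled divisions always complete --- but your write-up is considerably more rigorous than the published one. The paper's proof simply asserts monotonicity from ``no cell death,'' notes that $a>0$ while $Q+2P<K$, and then claims that the saturated state is reached and the remaining $P$ cells drain into $Q$; it never actually argues that saturation occurs, nor why $Q+2P=K$ together with $P\rightarrow 0$ yields $Q\rightarrow K$. You supply exactly the missing pieces: the conserved inequality $Q+2P\le K$ (checked channel by channel, which also gives the a priori bound $N\le K$), the stabilization of the integer-valued, bounded, non-decreasing $N$ at some $N^\star$, the contradiction showing $N^\star<K$ is impossible because a zero-$P$, sub-capacity state has positive switching propensity and any newly created $P$ cell divides in finite time, and the closing observation that $Q+P=K$ together with $Q+2P\le K$ forces $P=0$. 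The one technical point you flag --- interleaving the ``positive propensity fires a.s.'' argument with the deterministic firing of delayed reactions at a random time $T^\star$ --- can be dispatched more elementarily than by invoking the strong Markov property of the hybrid process: each $Q\rightarrow P$ event strictly decreases the free space $K-Q-2P$, which is a non-negative integer, so the total number of reactions along any trajectory is finite and one can argue directly about the terminal state of the embedded event chain. (A cosmetic mismatch: the paper's Eq.~(1) and its proof write the propensity without the factor $Q$, almost certainly a typo; your mass-action form $(r/K)Q(K-Q-2P)$ is the intended one and is why you correctly have to exclude $N^\star=0$ separately.)
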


\begin{proof}
Since the model does not incorporate cell death, it follows naturally that the total population size cannot decrease over time. Quiescent cells commit to cell division with propensity $a = \frac{r}{K}(K - Q - 2P)$, as defined in Eq. \ref{eq:wm_cell_cycle}. Given our assumption that initially \(Q(0) + P(0) < K\), the propensity \(a\) —and thus the probability of cell division— is strictly positive whenever \(Q + 2P < K\), and zero only when \(Q + 2P = K\). Consequently, the total population size \(N(t)\) is a monotonically non-decreasing function of time.

Additionally, by the construction of the algorithm, proliferating (\(P\)) cells deterministically revert to the quiescent (\(Q\)) state after a fixed time interval. Therefore, when the condition \(Q + 2P = K\) is reached, no additional \(Q\) cells initiate division, and existing \(P\) cells eventually transition back to phenotype \(Q\). Thus, $\lim_{t\rightarrow\infty}Q(t)=K$ and $\lim_{t\rightarrow\infty}P(t)=0$.
\end{proof}

\section*{Results}
\subsection*{Comparison with the mean field logistic growth}
In this section, we aim to show the impact of the cell cycle on the population dynamics. 
To do so, we compare our model to the classical logistic model. 
This comparison reveals that our model can better capture the complex dynamics of cell populations, such as the presence of oscillations.
With proliferation rate $r_{mf}$ and the capacity $K$ of the environment, the deterministic logistic growth model obtains the time evolution of the number of cells $N_{mf}$. 
The well-known Verhulst model is as follows:
$$N'_{mf}=\frac{r_{mf}}{K}N_{mf}(K-N_{mf}).$$ 
For the sake of comparison the capacity in the mean-field model and in our approach is chosen to be the same value.
Note that in the logistic model it is inherently assumed that the inter-division time is exponentially distributed and the cell cycle takes no time to complete: upon deciding to divide, cells split up in an instantenious fashion. 
From probability theory it is well known that the expected value of the interdivision time is $r_{mf}^{-1}$.
Thus, we may expect that the mean-field logistic model gives a good approximation to the averaged output of the delayed logistic growth in case the length of the cell cycle is small compared to the inter division time: $\vartheta\ll r_{mf}^{-1}$.

To illustrate this, on Fig. \ref{fig:cc_log} we fixed $r=1$ and $K=10^4$ and chose initial condition $Q(0)=500, P(0) = 0$. 
The total number of cells is $N(t)=P(t)+Q(t)$. 
Every subfigure of Fig.\ref{fig:cc_log} shows the average of 20 realizations of the stochastic model with varying values of constant cell cycle delay $\vartheta=0.01, 1$, and $5$. 

\begin{figure}[h]
	\centering 
	\includegraphics[ width=.99\textwidth]{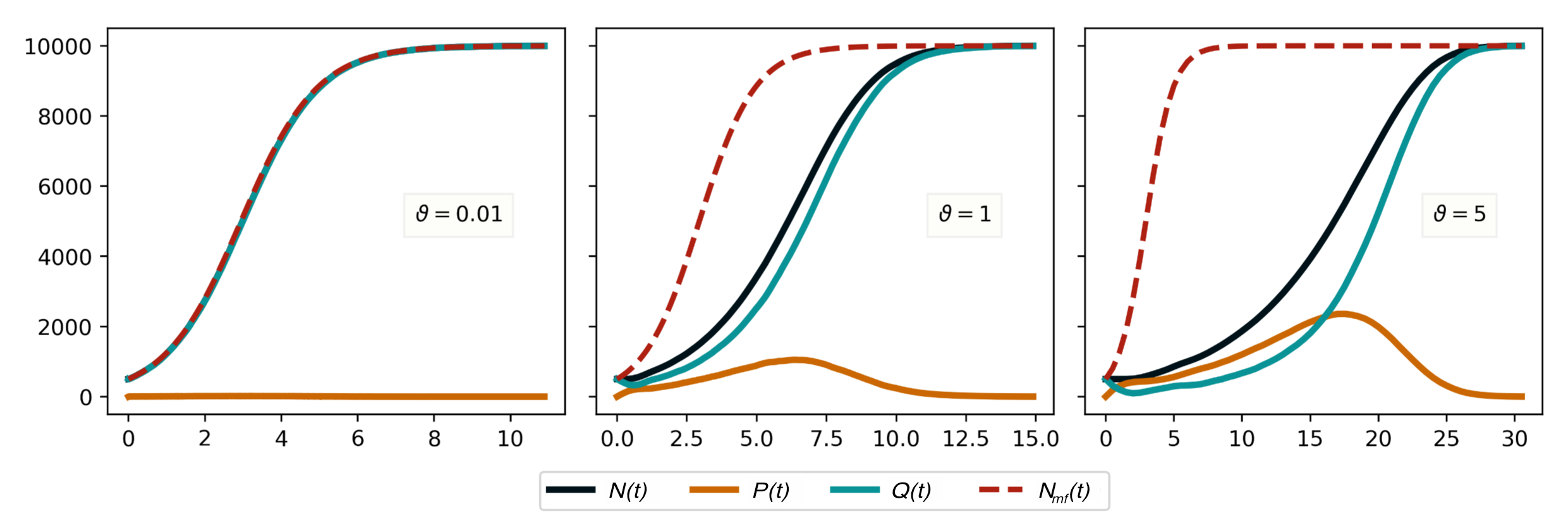}
    \caption{Logistic growth with explicit cell cycle length. 
    The figure illustrates the effect of cell cycle on the dynamics of the population and compares the stochastic delay model with the classical mean field logistic model. 
    Each figure shows the averaged output of 20 independent runs with $r=1, K=10^4$ and initial condition $Q(0)=500, P(0)=0$. 
    The delay parameter, the length of the cell cycle varies on the figure: from left to right the value of $\vartheta$ is $0.01,1$, and $5$, respectively. 
    The legend on the bottom applies for all figures, $N_{mf}(t)$ stands for the mean field logistic curve.}
	\label{fig:cc_log}
\end{figure}

We may obtain a really good agreement between the logistic model and the delayed stochastic model in case of a short cell cycle $(\vartheta=0.01\ll 1=1/r_{mf})$ and it can be seen that almost all cells are in the inter division resting state $Q$ state, as it is expected.
For $\vartheta=1=1/r_{mf}$ the agreement between the logistic model and the stochastic simulation with cell cycle is very poor. 
The delay, caused by the cycle slows the development of the population shifting the curves to the right. 
It also creates a transient oscillation that we will investigate in the next section.
We may observe that a significant amount of proliferating cells appear due to the large cell cycle causing the curve of the total number of cells $N(t)$ and the curve of the number of quiscent cells $Q(t)$ not to coincide.
Finally, for $\vartheta=5$, the non-monotonic behavior of the $P(t)$ curve becomes exaggerated making the transient oscillation in the beginning last longer.

\subsection*{Synchronicity}
The time cells spend in resting mode is the inverse of the phenotype switching rate: $r^{-1}$.
Synchronicity may occur when $r^{-1}$ is small compared to the length of its cell cycle: $r^{-1}\ll\vartheta$. 
This causes the cells to enter the proliferating state more or less together resulting the unusual step-like growth dynamics that can be seen on the left part of Fig. \ref{fig:cc_sync1}. 
This special dynamics may last until the population reaches the capacity $K$ in case of certain parameter configuration.
We will spend more time investigating this in the very next section.
Remarkably, cells in this model are entirely independent and do not influence each other's behavior. 
Thus, this synchronicity is different from the synchronization of coupled oscillators.

\begin{figure}[h!]
	\centering 
\includegraphics[width=.99\textwidth]{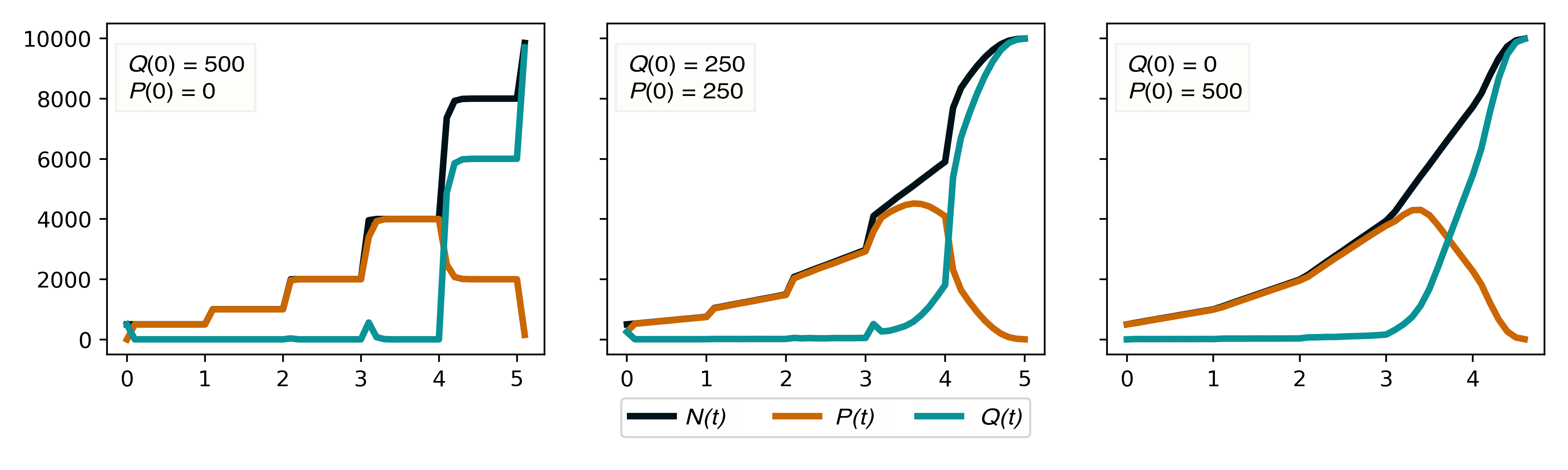}
    \caption{Step-wise synchronization of cells may last for a long time, but it is sensitive for the initial conditions. 
    The figures show that the characteristic step-like dynamics may cease to exist if the initial cell population is in the proliferating state and their scheduled division events are scattered uniformly in the interval $[0,\vartheta]$.}
	\label{fig:cc_sync1}
\end{figure}

Synchronization is very sensitive for initial conditions: on Fig. \ref{fig:cc_sync1} we set $\vartheta=1$ and $r=100$ for all figures, then we plotted the averaged time series of 20 runs for 3 different initial conditions.
In case of the leftmost figure all cells are quiscent initially, $Q(0)=500, P(0)=0$. 
The high rate forces the cells to enter the proliferating state during a very short time interval. Then, the cells spend as long as $\vartheta=1$ time in state P, resulting a horizontal line of length about $\vartheta$. 
Upon the cell cycle elapsed, the population doubles its size in a short time.

On the middle plot of Fig. \ref{fig:cc_sync1} the number of motile and proliferating cells are the same: $M(0)=P(0)=250$.
As cells may be in infinitely many stages in their cell cycle we have to prescribe initial conditions for the cells in state $P$.
We chose the scheduled times $\Theta$ for the delays from continuous uniform distribution $\Theta\sim\text{Uniform}(0,\vartheta)$ as initial condition. 
It is well known that on any finite interval $[a,b]$ among all the continuous distributions, for which the support is this interval, $\text{Uniform}(a,b)$ has the maximum entropy. 
Since the scheduled delay reactions are scattered uniformly in the interval $[0,\vartheta]$ the time between them is approximately the same, $\vartheta/250$. 
Making the cells to reenter state M with a constant rate, causing a slope in the straight line segments.
Finally, in the middle plot of Fig. \ref{fig:cc_sync1}, all cells are proliferating, and their scheduled division times are uniformly distributed within the interval \([0,\vartheta]\), thus eliminating synchronization.

\subsection*{Applying the model to experimental data}
\subsubsection*{Experimental data}
Now, we apply our model to the experimental data provided by Vitadello et al. in the supplementary material of their article \textit{Mathematical models incorporating a multi-stage cell cycle replicate normally-hidden inherent synchronization in
cell proliferation} \cite{Vittadello_sup}.

The authors run several \textit{in vitro} laboratory experiments on three melanoma cell lines: C8161, WM983C, and 1205Lu.
The cells were transfected to express a cell cycle dependent fluorescent dye, FUCCI.
This dye labels cells according to their state in the cell cycle.
The nucleus of a cell in phase G1 fluoresces red, it fluoresces green in phase S, G2, and  M, and in the early phase of S both colors appear and the nucleus fluoresces yellow.
The population was monitored for 48 hours and photographed every 15 minutes under microscope.
Then, from each image, the number of cells were counted and a multi-stage mathematical model were fit to the time series.
The authors claim that they applied a careful preparation on the cell culture to prevent any induced synchronization.
For further details we refer to the supplementary material of \cite{Vittadello_sup}.

\subsubsection*{Fitting to the data}

According to the multi-stage model described by Vittadello et al. \cite{Vittadello}, the cell cycle length follows an exponentially modified Erlang distribution, where the shape parameter \(k\) corresponds to the number of stages. 
However, since our model uses a delay-based approach to represent the cell cycle, it is not necessary for the shape parameter to be restricted to integer values. 
Therefore, following Golubev \cite{Golubev} we assume the cell cycle length \(\vartheta\) follows a gamma distribution \(\vartheta \sim \text{Gamma}(\alpha, \lambda)\), with shape parameter \(\alpha\) and rate parameter \(\lambda\) to be fitted to experimental data.

The initial conditions for the number of motile cells \(Q(0)\) and proliferating cells \(P(0)\) are provided by the experimental setup. However, the initial distribution of proliferating cells—specifically, the distribution of their remaining cell cycle times—must be determined. We assume this distribution also follows a gamma distribution, characterized by parameters \(\alpha_0\) and \(\lambda_0\).
Additionally, the proliferation rate \(r\) must be determined, while the carrying capacity \(K\) is fixed at \(10^5\).

Consequently, the parameter vector to be estimated is \((r, \alpha, \lambda, \alpha_0, \lambda_0)\). To estimate these parameters, we employed the Nelder-Mead simplex optimization method implemented in the SciPy Python library, minimizing the squared differences between the simulated and observed time-series data.
Parameter fitting was performed using data from a single experiment on the C8161 cell line (Fig. \ref{fig:cc_fit}). As demonstrated in the figure, our model provides an excellent fit to the observed data.

\begin{figure}[h!]
	\centering 
	\includegraphics[width=.45\textwidth]{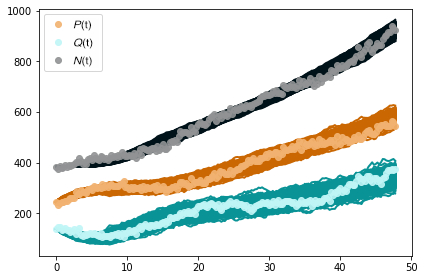}
    \includegraphics[width=.45\textwidth]{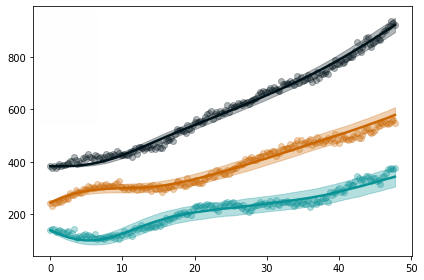}
    \caption{\textbf{Cell cycle model with delay fitted to experimental data by Vittadello et al. \cite{Vittadello_sup}} Left panel: Scatter plots represent measured data points; solid lines show the time series generated from 1000 simulations with fitted parameters.
    Right panel: Simulation results for the C8161 phenotype, showing the empirical mean of 1000 independent runs (solid lines), shaded areas representing ±2 standard deviations around the mean, and experimentally measured data points (scatter plot).}
	\label{fig:cc_fit}
\end{figure}

\section*{Discussion}
Our model presents several advantages compared to conventional multi-stage models. Firstly, by incorporating explicit delays instead of multiple intermediate stages, it significantly simplifies both computational implementation and the interpretation of results. The explicit representation of the cell cycle length allows for precise modeling using biologically realistic distributions, such as the gamma distribution, effectively capturing variability observed experimentally. This flexibility is critical for accurately modeling biological systems where cell cycle timing directly influences population dynamics.

Secondly, the reduced number of parameters in our model compared to multi-stage models simplifies the fitting process. Unlike multi-stage models, where parameter estimation is complicated by numerous artificial intermediate stages, our approach employs fewer biologically interpretable parameters, thus enhancing model reliability and generalizability across different experimental settings.

Moreover, the computational efficiency of our model allows for rapid simulations, facilitating extensive parameter exploration and statistical analyses. This efficiency is particularly beneficial in therapeutic contexts, where understanding population-level responses to cell cycle-dependent treatments requires numerous simulations across diverse scenarios.

Finally, the model effectively captures synchronization phenomena without necessitating direct interactions or coupling between cells, offering valuable insights into intrinsic synchronization mechanisms. This has significant implications for developing and optimizing therapeutic strategies targeting synchronized cell populations in cancer treatments. Overall, our model provides a robust and practical framework for simulating cell population dynamics, emphasizing the critical role of explicit cell cycle length modeling.

\bibliography{bibliography}

\begin{thebibliography}{1}
\urlstyle{rm}
\expandafter\ifx\csname url\endcsname\relax
  \def\url#1{\texttt{#1}}\fi
\expandafter\ifx\csname urlprefix\endcsname\relax\def\urlprefix{URL }\fi
\expandafter\ifx\csname doiprefix\endcsname\relax\def\doiprefix{DOI: }\fi
\providecommand{\bibinfo}[2]{#2}
\providecommand{\eprint}[2][]{\url{#2}}

\bibitem{lehninger}
\bibinfo{author}{Nelson, D.~L.} \& \bibinfo{author}{Cox, M.~M.}
\newblock \emph{\bibinfo{title}{Lehninger principles of biochemistry}} (\bibinfo{publisher}{W.H. Freeman}, \bibinfo{year}{2017}).

\bibitem{yates}
\bibinfo{author}{Yates, C.}, \bibinfo{author}{Ford, M.} \& \bibinfo{author}{Mort, R.}
\newblock \bibinfo{journal}{\bibinfo{title}{A multi-stage representation of cell proliferation as a {M}arkov process}}.
\newblock {\emph{\JournalTitle{Bulletin of Mathematical Biology}}} \textbf{\bibinfo{volume}{79}}, \bibinfo{pages}{2905–2928}, \doiprefix\url{10.1007/s11538-017-0356-4} (\bibinfo{year}{2017}).

\bibitem{Vittadello}
\bibinfo{author}{Vittadello, S.~T.}, \bibinfo{author}{McCue, S.~W.}, \bibinfo{author}{Gunasingh, G.}, \bibinfo{author}{Haass, N.~K.} \& \bibinfo{author}{Simpson, M.~J.}
\newblock \bibinfo{journal}{\bibinfo{title}{Mathematical models incorporating a multi-stage cell cycle replicate normally-hidden inherent synchronization in cell proliferation}}.
\newblock {\emph{\JournalTitle{Journal of The Royal Society Interface}}} \textbf{\bibinfo{volume}{16}}, \doiprefix\url{10.1098/rsif.2019.0382} (\bibinfo{year}{2019}).
\newblock \eprint{https://royalsocietypublishing.org/doi/pdf/10.1098/rsif.2019.0382}.

\bibitem{Golubev}
\bibinfo{author}{Golubev, A.}
\newblock \bibinfo{journal}{\bibinfo{title}{Applications and implications of the exponentially modified gamma distribution as a model for time variabilities related to cell proliferation and gene expression}}.
\newblock {\emph{\JournalTitle{Journal of Theoretical Biology}}} \textbf{\bibinfo{volume}{393}}, \bibinfo{pages}{203--217}, \doiprefix\url{https://doi.org/10.1016/j.jtbi.2015.12.027} (\bibinfo{year}{2016}).

\bibitem{Vittadello_sup}
\bibinfo{author}{Vittadello, S.~T.}, \bibinfo{author}{McCue, S.~W.}, \bibinfo{author}{Gunasingh, G.}, \bibinfo{author}{Haass, N.~K.} \& \bibinfo{author}{Simpson, M.~J.}
\newblock \bibinfo{journal}{\bibinfo{title}{Supplementary material from "mathematical models incorporating a multi-stage cell cycle replicate normally hidden inherent synchronization in cell proliferation"}}.
\newblock {\emph{\JournalTitle{Journal of The Royal Society Interface}}} \textbf{\bibinfo{volume}{16}}, \doiprefix\url{https://doi.org/10.6084/m9.figshare.c.4608440.v1} (\bibinfo{year}{2019}).

\bibitem{baker-rost}
\bibinfo{author}{Baker~R.E., R.~G.}
\newblock \bibinfo{journal}{\bibinfo{title}{Global dynamics of a novel delayed logistic equation arising from cell biology.}}
\newblock {\emph{\JournalTitle{J Nonlinear Sci}}} \textbf{\bibinfo{volume}{30}}, \bibinfo{pages}{397–418}, \doiprefix\url{10.1007/s00332-019-09577-w} (\bibinfo{year}{2020}).

\bibitem{ECMI}
\bibinfo{author}{Baker, R.~E.}, \bibinfo{author}{Boldog, P.} \& \bibinfo{author}{R{\"o}st, G.}
\newblock \bibinfo{title}{Convergence of solutions in a mean-field model of go-or-grow type with reservation of sites for proliferation and cell cycle delay}.
\newblock In \bibinfo{editor}{Farag{\'o}, I.}, \bibinfo{editor}{Izs{\'a}k, F.} \& \bibinfo{editor}{Simon, P.~L.} (eds.) \emph{\bibinfo{booktitle}{Progress in Industrial Mathematics at ECMI 2018}}, \bibinfo{pages}{381--387} (\bibinfo{publisher}{Springer International Publishing}, \bibinfo{address}{Cham}, \bibinfo{year}{2019}).

\bibitem{TPM}
\bibinfo{author}{Smith, J.~A.} \& \bibinfo{author}{Martin, L.}
\newblock \bibinfo{journal}{\bibinfo{title}{Do cells cycle?}}
\newblock {\emph{\JournalTitle{Proceedings of the National Academy of Sciences}}} \textbf{\bibinfo{volume}{70}}, \bibinfo{pages}{1263--1267}, \doiprefix\url{10.1073/pnas.70.4.1263} (\bibinfo{year}{1973}).

\bibitem{boldog}
\bibinfo{author}{Boldog, P.}
\newblock \bibinfo{journal}{\bibinfo{title}{Exact lattice-based stochastic cell culture simulation algorithms incorporating spontaneous and contact-dependent reactions, preprint}}.
\newblock {\emph{\JournalTitle{arXiv}}}  (\bibinfo{year}{2022}).

\end{thebibliography}

\section*{Acknowledgements}

The authors were supported by the National Research, Development and Innovation Office (NKFIH) in Hungary, grants KKP 129877, 2022-2.1.1-NL-2022-00005, TKP2021-NVA-09.

\appendix

\begin{algorithm}\label{alg-1}
\caption{Logistic growth with explicit cell cycle length SSA}\label{alg:Log_CCD_SSA}
\begin{algorithmic}
\State \textbf{Input:} initial number of motile cells $Q_0=Q(0)$ and proliferating cells $P_0=P(0)$, initial distribution of remaining times $(\Theta_1,\dots,\Theta_{P_0})$, proliferating rate constant $r$, carrying capacity $K$, distribution for the length of the cell cycle and halting condition $t_{end}$.\\
\State 
\textbf{Initialisation:} \\
Set $t \gets 0$ \Comment{Set system time to zero.}
\State Set $Q\gets Q(0)$ and $\underbar{P}\gets(\Theta_1,\dots,\Theta_{P_0})$
\Comment{Set initial state of the system.}\\

\While{$ t\leq t_{end} \textrm{ and } 0 < Q+2\cdot len(\underbar  P) < K$}
\State Calculate the propensity function $a$ according to Eq. (\ref{eq:wm_cell_cycle}).
\State Sample $\tau \sim Exp(a)$ according to Eq. (\ref{eq:wm_cell_cycle}).
\State Calculate the minimal of remaining times $t_{R_D}$ according to Eq. (\ref{eq:cc_tRD}).\\
\If{$\tau < t_{R_D}$}  \Comment{\textbf{A quiscent cell enters cell cycle}}
\State $t \gets t + \tau$
\Comment{Advance system time}
\State $\underbar P\gets (\Theta_1-\tau,\dots,\Theta_{P}-\tau)$
\Comment{Decrease remaining times of proliferating cells}
\State $Q\gets Q-1$
\Comment{Decrease the number of motile cells}
\State Obtain $\vartheta$
\Comment{Obtain cell cycle length for the new proliferating cell}
\State $\underbar P\gets (\Theta_1,\dots,\Theta_{P},\vartheta)$
\Comment{Add the cell cycle length to the list of remaining times in the proliferating list}\\
\ElsIf{$t_{R_D} < \tau$} \Comment{\textbf{A proliferating cell completes cell cycle}}
\State $t \gets t + t_{R_D}$
\Comment{Advance system time}
\State Get index $i$ of $t_{R_D}$ in $\underbar P$
\Comment{Find the proliferating cell with the least remaining time}
\State $\underbar P.delete(\underbar P[i])$ 
\Comment{The proliferating cell with the least remaining time leaves cell cycle}
\State $\underbar P\gets (\Theta_1-t_{R_D},\dots,\Theta_{P}-t_{R_D})$
\Comment{Decrease remaining times of proliferating cells}
\State $Q\gets Q+2$
\Comment{Two new (motile) daughter cell is created}\\
\EndIf
\EndWhile
\end{algorithmic}
\end{algorithm}

\end{document}